\setlist{itemsep=.15em}
\global\let\Box\square
\newlength{\algowidth}\setlength{\algowidth}{\textwidth}
\def\tabulation{xxx\=xxx\=xxx\=xxx\=xxx\=xxx\=xxx\=xxx\=xxx\= \kill}
\newtheorem{thm}{Theorem}[section] 
\newtheorem{lem}{Lemma}[section]
\newtheorem{rem}{Remark}
\newenvironment{proof}{\begin{trivlist} 
        \item[]\hspace{0cm}{\bf Proof.} 
\hspace{0cm} }{\hfill $\Box$ 
                       \end{trivlist}}
\def\bibfmta#1#2#3#4{{#1}, {#2}, {\sl #3}, #4.}
\def\bibfmtb#1#2#3#4{{#1}, {\sl #2}, { #3}, #4.}
\begin{document}

\title{A Distributed Algorithm for Constructing a Minimum Diameter Spanning Tree}
\author{Marc Bui\,$^a\quad$ Franck Butelle\,$^b\,$\thanks{Corresponding author:  
LIPN, CNRS UPRES-A 7030, Universit\'e Paris-Nord, 99, Av. J.-B. Cl\'ement 93430 Villetaneuse, France. E-mail: 
butelle@lipn.univ-paris13.fr}$\quad$ Christian Lavault\,$^c$\\
$^a\,${\small \sl LDCI, Universit\'{e} Paris 8, France}$\quad$  
$^b\,${\small \sl LIPN -- CNRS  7030, Universit\'{e} Paris 13, France}\\
$^c\,${\small \sl LIPN -- CNRS UMR 7030, Universit\'e Paris-Nord}
}
\date{\empty}
\maketitle

\begin{abstract}
We present a new algorithm, which solves the problem of distributively 
finding a minimum diameter spanning tree of any (non-negatively) 
real-weighted graph $G = (V,E,\omega)$. As an intermediate step, we use a new, 
fast, linear-time all-pairs shortest paths distributed algorithm to find 
an absolute center of $G$. The resulting distributed algorithm is asynchronous, 
it works for named asynchronous arbitrary networks and achieves 
$\mathcal{O}(|V|)$ time complexity and $\mathcal{O}\left(|V|\,|E|\right)$ 
message complexity.

\smallskip \noindent 
{\it Keywords}:\ Spanning trees; Minimum diameter spanning trees; Shortest paths; 
Shortest paths trees; All-pairs shortest paths; Absolute centers.
\end{abstract}

\section{Introduction} \label{intro}
Many computer communication networks require nodes to broadcast 
information to other nodes for network control purposes; this is 
done efficiently by sending messages over a spanning tree of the 
network. Now, optimizing the worst-case message propagation over 
a spanning tree is naturally achieved by reducing the diameter to a minimum.

The use of a control structure spanning the entire network is a 
fundamental issue in distributed systems and interconnection networks. 
Given a network, a distributed algorithm is said to be {\em total} 
iff all nodes participate in the computation. Now, all total 
distributed algorithms have a time complexity of $\Omega(D)$, 
where $D$ is the network diameter (either in terms of hops, or 
according to the wider sense given by Christophides in~\cite{Chri75}). 
Therefore, having a spanning tree with minimum diameter of arbitrary
networks makes it possible to design a wide variety of time-efficient 
distributed algorithms. In order to construct such a spanning tree, 
all-pairs shortest paths in the graph are needed first. Several 
distributed algorithms already solve the problem on various assumptions. 
However, our requirements are more general than the usual ones. 
For example, we design a ``process terminating'' algorithm for 
(weighted) networks with no common knowledge shared between the 
processes. (See assumptions in Subsection~\ref{prob} below.)

\subsection{Model, Notations and Definitions}
A distributed system is a standard point-to-point asynchronous 
network consisting of~$n$ communicating processes connected by 
$m$ bidirectional channels. Each process has a local non-shared memory 
and can communicate by sending messages to and receiving messages 
from its neighbours. A single process can transmit to and receive 
from more than one neighbour at a time.

The network topology is described by a finite, weighted, connected 
and undirected graph $G=(V,E,\omega)$, devoid of multiple edges and 
loop-free. $G$ is a structure which consists of a finite set of nodes 
$V$ and a finite set of edges $E$ with real-valued weights; each edge 
$e$ is incident to the elements of an unordered pair of nodes $(u,v)$. 
In the distributed system, $V$ represents the processes, while $E$ 
represents the (weighted) bidirectional communication channels 
operating between neighbouring processes~\cite{Lava95}. 
We assume that, for all $(u,v)\in E$, $\omega(u,v)=\omega(v,u)$ and, 
to shorten the notation, $\omega(u,v)=\omega(e)$ denotes the 
real-valued weight of edge $e=(u,v)$. (Assumptions on real-valued 
weights of edges are specified in the next two Subsections~\ref{prob} 
and~\ref{ass}.) Throughout, we let $|V|=n$, $|E|=m$ and, according 
to the context, we use $G$ to represent the network or the weighted 
graph, indistinctly.

The weight of a path $[u_{0},\ldots,u_{k}]$ of~$G$ 
($u_{i}\in V,\ 0\le i\le k$) is defined as 
$\sum_{0\le i\le k-1}\omega(u_{i},u_{i+1})$. 
For all nodes $u$ and $v$ in $V$, the {\em distance} from $u$ to $v$, 
denoted $d(u,v)=d_G(u,v)=d(v,u)=d_G(v,u)$, is the lowest weight 
of any path length from $u$ to $v$ in $G$. The largest (minimal) 
distance from a node $v$ to all other nodes in~$V$, denoted 
$ecc(v)=ecc_G (v)$, is the {\em eccentricity} of node~$v$: 
$ecc(v)=\max_{u\in V}d(u,v)$~\cite{Chri75}. An {\em absolute center} 
of $G$ is defined as a node (not necessarily unique) achieving 
the smallest eccentricity in $G$. 

$D=D(G)$ denotes the {\em diameter} of $G$, defined as 
$D=\max_{v\in V} ecc(v)$ (see~\cite{Chri75}) and $R=R(G)$ denotes the 
{\em radius} of $G$, defined as $R=\min_{v\in V} ecc(v)$.
Finally, $\Psi(u)=\Psi_G(u)$ represents the {\em shortest paths tree} 
(SPT) of $G$ rooted at node $u$: $(\forall v\in V)\; d_{\Psi(u)}(u,v)=d(u,v)$. $\Psi(u)$ 
is chosen {\em uniquely} among all the shortest paths trees 
of $G$ rooted at node $u$; whenever there is a tie between 
any two length paths $d(u,v)$, it is broken by choosing the 
path with a second node of minimal identity. The set of all 
SPTs of $G$ is denoted $\Psi =\Psi(G)$. When it is clear 
from the context, the name of the graph is omitted.

In the remainder of the paper, we denote problems as ``the (MDST) 
problem'', ``the (MST) problem'', ``the (GMDST) problem'', etc. 
(see definitions in Subsection~\ref{rwr}). Distributed algorithms 
are denoted in italics, e.g. ``algorithm {\em MDST\/}''. Finally, 
``MDST'', ``APSPs'' and ``SPT'' abbreviate ``minimum diameter spanning 
tree'', ``all-pairs shortest paths'' and ``shortest paths tree'', 
respectively.

\subsection{The Problem} \label{prob}
Given a weighted graph $G=(V,E,\omega)$, the {\bf (MDST) problem} 
is to find a spanning tree of~$G$ of minimum diameter $D$ (according 
to the definition of $D$).

Note that the (MDST) problem assumes $G$ to be a non-negatively 
real-weighted graph (i.e., $\forall e\in E\ \; \omega(e)\in \mathbb{R}^+$). 
Indeed, the (MDST) problem is known to be NP-hard if we allow negative 
length cycles in $G$ (cf.~Camerini {\em et al.}~\cite{CaGM80}).

In spite of the fact that the (MDST) problem requires arbitrary 
non-negative real-valued edges weights, our distributed MDST 
algorithm is process terminating (i.e., a proper distributed 
termination is completed~\cite{Lava95}). This is generally not 
the case on the above requirement. When weights are assumed 
to be real-valued, a common (additional) knowledge of a bound 
on the size of the network is usually necessary for APSPs 
algorithms to process terminate (see e.g.~\cite{Awer85,BeGa92,Lamp82}). 
By contrast, no ``structural information'' is assumed in our 
algorithm, neither topological (e.g., size or bound on the size of the 
network), nor a sense of direction, etc. (see Subsection~\ref{formal}).

\subsection{Assumptions} \label{ass}
In addition to the above general hypothesis of the (MDST) problem, 
we need the following assumptions on the network.
\begin{itemize} 
\item Processes are faultless, and the communication channels 
are faithful, lossless and order-preserving (FIFO).

\item All processes have {\em distinct} identities ({\em ID\/}s). 
($G$ is called a ``named network'', by contrast with ``anonymous 
networks''.) We need distinct {\em ID\/}s to compute the APSPs 
routing tables of $G$ at each process of the network. For the sake of 
simplicity, {\em ID\/}s are also assumed to be non-negative distinct 
integers.

Each process must distinguish between its ports, but has no 
{\em a priori} knowledge of its neighbours {\em ID\/}s. Actually, 
any process knows the {\em ID\/} of a sending process after reception 
of its first message. Therefore, we assume w.l.o.g. (and up to $n-1$ 
messages at most) that a process knows the {\em ID\/} of each of its 
neighbours from scratch (see protocol {\em APSP} in Subsection~\ref{mdst}).

\item Of course, each node also knows the weights of its adjacent 
edges. However, edges weights do not satisfy the triangular 
inequality.

\item Let $\mathcal{A}$ be a distributed algorithm defined on $G$. 
A non-empty subset of nodes of $V$, called {\em initiators}, 
simultaneously start algorithm $\mathcal{A}$. In other words, an 
external event (such as a user request, for example), impels the 
initiators to trigger the algorithm. Other (non-initiating) nodes 
``wake up'' upon receipt of a first message.

\item In a reliable asynchronous network, we measure the 
communication complexity of an algorithm $\mathcal{A}$ 
in terms of the maximal number of messages that are received, 
during any execution of $\mathcal{A}$. We also take into account 
the number of bits in the messages (or message size): this yields the 
``bit complexity'' of $\mathcal{A}$. For measuring the time complexity 
of $\mathcal{A}$, we use the definition of standard time complexity 
given in~\cite{Lava95,Lync96}. Standard time complexity is defined 
on {\em ``Asynchronous Bounded Delay networks''} (ABD networks): 
we assume an upper bound transmission delay time of $\tau$ for {\em each} 
message in a channel; $\tau$ is then the ``standard time unit'' in $G$.
\end{itemize}

\subsection{Related Works and Results} \label{rwr}
The small amount of literature related to the (MDST) problem mostly
deals either with graph problems in the Euclidian plane (geometric
minimum diameter spanning tree: the (GMDST) problem), or with the
Steiner spanning tree construction (see~\cite{HLCW91,IhRW91}). The
(MDST) problem is clearly a generalization of the (GMDST) problem. 
The sequential problem has been addressed by some authors (see for
example~\cite{Chri75}).

Surprisingly, despite the importance of having a {\em MDST} in arbitrary
distributed systems, only few papers have addressed the question of
how to design algorithms which construct such spanning trees. Finding
and maintaining a {\em minimum spanning tree} (the (MST) problem) 
has been extensively studied in the literature 
(e.g.~\cite{Awer87,AwCK90,EITT+92,GaHS83}). More recently, the problem 
of maintaining a {\em small} diameter was however solved in~\cite{ItRa94}, 
and the distributed (MDST) problem was addressed in~\cite{BuBu93,BuLB95}.

\subsection{Main contributions of the paper}
Our algorithm {\em APSP} is a generalization of APSP 
algorithms on graphs with unit weights (weights with value 1) 
to the case of non-negatively real-weighted graphs. To our knowledge, 
our MDST finding algorithm is also the first which {\em distributively\/} 
solves the (MDST) problem~\cite{BuBu93}. The algorithm {\em MDST\/} 
works for named arbitrary network topologies with asynchronous 
communications. It achieves an ``efficient''  $\mathcal{O}(n)$ time complexity and 
$\mathcal{O}(nm(\log n + \log W))$ bits communication complexity, where $W$ 
is the largest weight of a channel. (An $\mathcal{O}(n)$ time complexity may be 
considered ``efficient'', though not optimal, since 
the construction of a spanning tree costs at least
$\Omega(D)$ in time).

The paper is organized as follows. In Section~\ref{algo} we present a
high-level description of the protocol {\em APSP}, a formal design 
of the procedure {\em Gamma\_star} and the algorithm {\em MDST}. 
Section~\ref{anal} is devoted to proofs and complexity analysis of the 
algorithm. Finally, concluding remarks are given in Section~\ref{concl}.

\section{The Algorithm} \label{algo}
\subsection{A High-Level Description}
\subsubsection{Main Issues}
First, we recall in Lemma~\ref{lem:abscenter} that the (MDST) 
problem for a weighted graph~$G$ is (polynomially) reducible 
to the absolute center problem for~$G$. Then, we constructively 
find and compute an absolute center of~$G$ by using its APSPs 
routing tables in Lemma~\ref{lem:hackimi}.

In summary, given a positively weighted graph~$G$, the main 
steps of our algorithm for the (MDST) problem are the following:
\begin{enumerate}
\item The computation of APSPs in $G$;
\item The computation of an absolute center of $G$ (procedure 
{\em Gamma\_star}\/$(e)$ in Subsection~\ref{formal});
\item The construction of a {\em MDST}  of $G$, and the transmission of 
the knowledge of that MDST to each node within the network $G$.
\end{enumerate} 

\subsubsection{Construction of a {\em MDST}}\label{mdst}
The definition of the eccentricity is generalized as follows. 
We view an edge $(u,v)$ with weight $\omega$ as a continuous 
interval of length $\omega$, and for any $0<\alpha <\omega$ 
we allow an insertion of a ``dummy node'' $\gamma$ and replace 
the edge $(u,v)$ by a pair of edges: $(u,\gamma)$ with weight 
$\alpha$ and $(\gamma,v)$ with weight $\omega -\alpha$.

According to the definition, the eccentricity $ecc(\gamma)$ 
of a {\em general node} $\gamma$ (i.e., either an actual node 
of $V$, or a dummy node) is clearly given by 
$ecc(\gamma)=\displaystyle{\max_{z\in V}d(\gamma,z)}$.
A node~$\gamma^*$ such that $ecc(\gamma^*)=\min_{\gamma}ecc(\gamma)$ 
is called {\em an absolute center} of the graph. Recall that $\gamma^*$ 
always exists in a connected graph and that it is not unique in general. 
Moreover, an absolute center of $G$ is usually one of the dummy nodes 
(see Fig.~\ref{fig:mdst}).

\begin{figure}[htb] %\psfig{figure=mdstex.ps,height=5cm}
\centerline{ \parbox{0.55\textwidth}{%
\includegraphics[height=5cm]{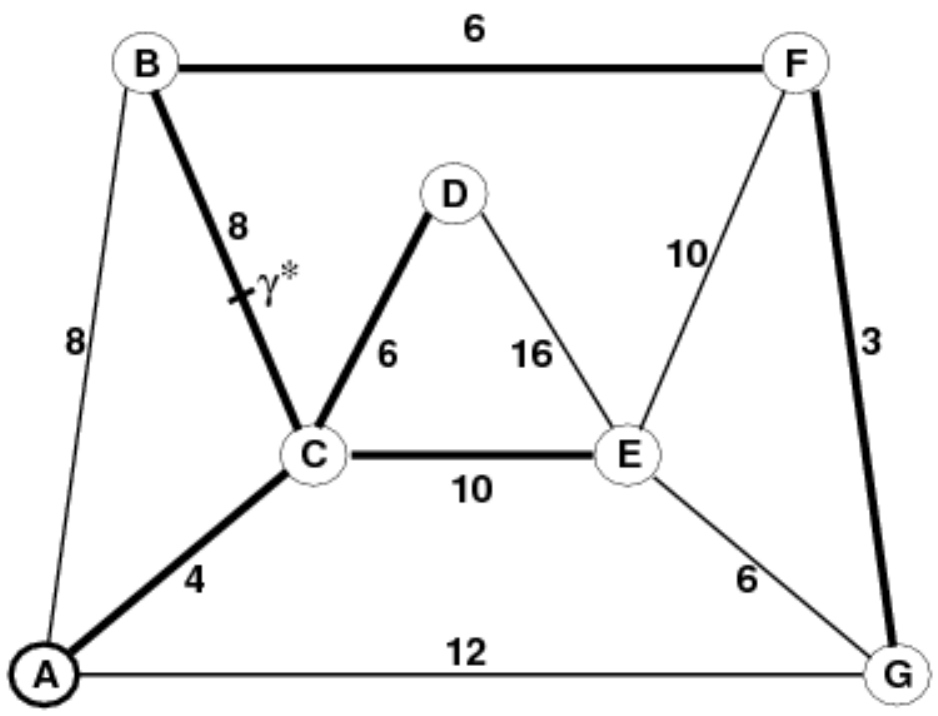}} \ 
\fbox{ \parbox{0.3\textwidth}{%
\rule{.18in}{.02in} \hspace{0.2cm} MDST $T^*$ of $G$ \\ $\gamma^*$: absolute center of $G$}}
}
\caption{Example of a MDST $T^*$ ($D(G)=22$ and $D(T^*)=27$).
$T^*$ is neither a shortest paths tree, nor a minimum spanning tree of~$G$.}
\label{fig:mdst}
\end{figure}
Similarly, the definition of $\Psi(u)$ is also generalized to 
account for the dummy nodes. Finding a {\em MDST} actually reduces 
to searching for an absolute center~$\gamma^*$ of~$G$: the SPT 
rooted at~$\gamma^*$ is a {\em MDST} of~$G$. Such is the purpose of 
the following Lemma~\ref{lem:abscenter}.

\begin{lem} {\em \cite{CaGM80}} \label{lem:abscenter} 
Given a weighted graph G, the (MDST) problem for G is (polynomially) 
reducible to the problem of finding an absolute center of G.
\end{lem}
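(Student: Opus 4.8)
The plan is to establish the polynomial reduction in the natural direction: show that once we can find an absolute center $\gamma^*$ of $G$, we can produce a minimum diameter spanning tree of $G$ in polynomial time, essentially for free. Concretely, I would argue that the (generalized) shortest paths tree $\Psi(\gamma^*)$ rooted at the absolute center, after contracting back any dummy node into its host edge, is an MDST of $G$. This is the claim already foreshadowed in the paragraph preceding the lemma (``the SPT rooted at $\gamma^*$ is a MDST of $G$''), so the proof is really a verification of that assertion together with a remark on polynomiality.

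First I would relate the diameter of a spanning tree $T$ to eccentricities. The key inequality is that for any tree $T$ and any general node $\gamma$ lying on $T$ (actual or dummy), $D(T) \ge 2\,ecc_T(\gamma) \ge 2\,ecc_G(\gamma) \ge 2\,ecc_G(\gamma^*)$, because the longest path in a tree through any point has length at least twice the distance from that point to the farthest node, and distances in a spanning subtree dominate distances in $G$. Hence $2\,ecc_G(\gamma^*)$ is a lower bound on $D(T)$ over all spanning trees $T$. Conversely, for the tree $T^* = \Psi_G(\gamma^*)$ (with the dummy node absorbed into an edge of $G$ so that $T^*$ is a genuine spanning tree of $G$), every node $z\in V$ satisfies $d_{T^*}(\gamma^*,z) = d_G(\gamma^*,z) \le ecc_G(\gamma^*)$; since every simple path in $T^*$ between two nodes $x,y$ is contained in the union of the $\gamma^*$-to-$x$ and $\gamma^*$-to-$y$ paths, its length is at most $2\,ecc_G(\gamma^*)$. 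Therefore $D(T^*) \le 2\,ecc_G(\gamma^*)$, matching the lower bound, so $T^*$ is a minimum diameter spanning tree. Finally, I would note that computing $\Psi_G(\gamma^*)$ from $\gamma^*$ (a single-source shortest paths computation, plus the trivial edge-split bookkeeping for the dummy node) is polynomial, which completes the reduction.

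The main obstacle, and the only genuinely delicate point, is the bookkeeping around the dummy node: an absolute center is typically an interior point of an edge $e=(u,v)$, so $\Psi_G(\gamma^*)$ is a tree on $V\cup\{\gamma^*\}$, and one must check that replacing the two half-edges $(u,\gamma^*)$ and $(\gamma^*,v)$ by a suitable portion of the original edge $e$ yields a spanning tree of $G$ whose diameter is still at most $2\,ecc_G(\gamma^*)$ — i.e. that re-merging the dummy node does not create a longer path. This is straightforward once one observes that the path distances only decrease (or stay equal) under the contraction, but it is the step that needs care because it is where the generalized-node formalism of Subsection~\ref{mdst} actually gets used. Everything else — the two eccentricity inequalities and the polynomial-time claim — is routine. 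Since the lemma is cited as \cite{CaGM80}, I would keep the write-up brief and refer the reader there for the original argument, presenting the above as the essential idea.
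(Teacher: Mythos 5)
The paper itself gives no proof of this lemma (it is quoted from~\cite{CaGM80}), so your argument is supplied from scratch; the strategy you choose — the SPT rooted at an absolute center, with the dummy node merged back into its host edge, is an MDST — is the intended reduction, and your upper-bound half ($D(T^*)\le 2\,ecc_G(\gamma^*)$ via root distances and the contraction bookkeeping) is sound. The genuine problem is the lower-bound step. The inequality you call key, ``for any tree $T$ and any general node $\gamma$ lying on $T$, $D(T)\ge 2\,ecc_T(\gamma)$'', is false: already for a path on three nodes with unit weights and $\gamma$ an endpoint, $ecc_T(\gamma)=2$ while $D(T)=2$, so $2\,ecc_T(\gamma)=4>D(T)$. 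The justification you give (``the longest path in a tree through any point has length at least twice the distance from that point to the farthest node'') fails on the same example — the longest path through a point is at least the distance to the farthest node, not twice it. So the chain $D(T)\ge 2\,ecc_T(\gamma)\ge 2\,ecc_G(\gamma)\ge 2\,ecc_G(\gamma^*)$, quantified over all $\gamma$, does not establish the lower bound as written.

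The repair is small but necessary: do not take an arbitrary $\gamma$, take $\gamma_T$ to be the midpoint of a diametral path of $T$ (an absolute center of $T$ in the general-node sense). For a tree one has $ecc_T(\gamma_T)=D(T)/2$: if some $z\in V$ had $d_T(\gamma_T,z)>D(T)/2$, then letting $p$ be the point where the $z$-to-path branch attaches, concatenating the $z$--$p$ segment with the half of the diametral path on the far side of $\gamma_T$ would give two actual nodes at tree distance exceeding $D(T)$. Since $\gamma_T$ lies on an edge of $T\subseteq G$, it is a general node of $G$, so $ecc_G(\gamma^*)\le ecc_G(\gamma_T)\le ecc_T(\gamma_T)=D(T)/2$, i.e. $D(T)\ge 2\,ecc_G(\gamma^*)$ for \emph{every} spanning tree $T$. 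Combined with your bound $D(T^*)\le 2\,ecc_G(\gamma^*)$ for the contracted SPT rooted at $\gamma^*$, and the observation that computing $\gamma^*$ plus one single-source shortest-paths computation is polynomial, this closes the reduction. With that one correction your write-up is a valid proof of the lemma the paper delegates to the citation.
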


\subsubsection{Computation of an absolute center of a graph} \label{subsubsect:abscenter}
The idea of computing absolute $p$-centers was first introduced by Hakimi, 
see for example~\cite{HaPS78}. Here we address the computation of an absolute 1-center.
According to the results in~\cite{Chri75}, we need the following lemma
(called Hakimi's method) to find an absolute center of~$G$.

\begin{lem}\label{lem:hackimi}
Let $G=(V,E,\omega)$ be a weighted graph. For each edge $e\in E$, 
let~$\gamma_e$ be the set of all the general nodes of G which achieve a 
minimal eccentricity for~$e$. A node achieving the minimal eccentricity 
among all nodes in $\displaystyle{\bigcup_{e\in E} \gamma_e}$ is an 
absolute center. Finding a minimum absolute center of G is thus achieved in
polynomial time.
\end{lem}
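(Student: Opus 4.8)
The plan is to reduce the search over the continuum of general nodes to a finite, explicitly computable set, one edge at a time. Fix an edge $e=(u,v)$ of weight $\omega=\omega(e)$ and parametrise the general nodes lying on $e$ by $\gamma_\alpha$, $0\le\alpha\le\omega$, where $\gamma_\alpha$ is the (dummy) node joined to $u$ by a sub-edge of weight $\alpha$ and to $v$ by a sub-edge of weight $\omega-\alpha$; thus $\gamma_0=u$ and $\gamma_\omega=v$. The first step is to observe that inserting $\gamma_\alpha$ does not change the distance $d(x,z)$ between any two nodes $x,z\in V$ (a dummy node merely subdivides an edge and creates no shortcut), and that for every destination $z\in V$,
\[
d(\gamma_\alpha,z)=\min\bigl(\alpha+d(u,z),\ (\omega-\alpha)+d(v,z)\bigr).
\]
Indeed, the only edges incident to an interior point $\gamma_\alpha$ are $(\gamma_\alpha,u)$ and $(\gamma_\alpha,v)$, and since the weights are non-negative there is a shortest path from $\gamma_\alpha$ to $z$ that does not revisit $\gamma_\alpha$; such a path leaves $e$ either through $u$ or through $v$, and both options are achievable, whence the displayed equality. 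Hence the APSPs of $G$ already contain all the information needed to describe every function $\alpha\mapsto d(\gamma_\alpha,z)$.

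Next I would use the shape of $ecc(\gamma_\alpha)=\max_{z\in V}d(\gamma_\alpha,z)$ as a function of $\alpha$ on $[0,\omega]$. By the formula above, each $d(\gamma_\alpha,z)$ is the minimum of an affine function of slope $+1$ and one of slope $-1$, hence a concave, piecewise-linear ``tent''. The eccentricity is the pointwise maximum of $n$ such tents, hence a continuous piecewise-linear function whose pieces all have slope $\pm1$; its minimum over the compact interval $[0,\omega]$ is therefore attained either at an endpoint $\alpha\in\{0,\omega\}$ --- i.e. at the actual node $u$ or $v$ --- or at a value $\alpha^{\star}$ where a descending piece $(\omega-\alpha)+d(v,z')$ meets an ascending piece $\alpha+d(u,z)$, namely $\alpha^{\star}=\bigl(\omega+d(v,z')-d(u,z)\bigr)/2$ for some $z,z'\in V$. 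Consequently $\gamma_e$ may be taken to be a finite set consisting of $u$, of $v$, and of the $\mathcal{O}(n^{2})$ breakpoints $\gamma_{\alpha^{\star}}$; evaluating $ecc(\cdot)$ at each candidate costs $\mathcal{O}(n)$, so $\gamma_e$ together with $\min_{\gamma\in\gamma_e}ecc(\gamma)$ is computed in $\mathcal{O}(n^{3})$ time (this is easily improved, but polynomiality is all we need).

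Finally, every general node of $G$ lies on some edge of $E$, so $\min_{\gamma}ecc(\gamma)=\min_{e\in E}\,\min_{\gamma\in\gamma_e}ecc(\gamma)$, which is exactly the assertion that a node of minimum eccentricity within the finite set $\bigcup_{e\in E}\gamma_e$ is an absolute center; assembling the per-edge costs yields an $\mathcal{O}(mn^{3})$ sequential algorithm once the APSPs are available, hence polynomial time. The only genuinely delicate point is the first step --- establishing the formula for $d(\gamma_\alpha,z)$ and the claim that a shortest path out of an interior point exits $e$ through one of its endpoints; once that formula is in hand, the concavity of each tent and the ``minimiser of a maximum of piecewise-linear functions'' argument used to locate the candidates are routine.
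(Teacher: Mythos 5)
Your proof is correct and follows essentially the same route as the paper: the per-edge formula $ecc(\gamma_\alpha)=\max_{z}\min\bigl(\alpha+d(u,z),\,\omega(e)-\alpha+d(v,z)\bigr)$, the upper envelope of slope-$\pm1$ tent functions whose minimum yields $\gamma^*_e$, and then the minimum over all edges. You merely make explicit two points the paper leaves implicit --- the justification that a shortest path from an interior dummy node exits through $u$ or $v$, and the finite candidate set of breakpoints giving the polynomial bound --- which is a welcome but not substantively different elaboration.
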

\begin{proof} (The proof is constructive.)

{\em (i)} For each edge $e=(u,v)$, let $\alpha = d(u,\gamma)$.
Since the distance $d(\gamma,z)$ is the length of a path
$[\gamma,u,\ldots ,z]$ or a path $[\gamma,v,\ldots ,z]$,
\begin{equation} \label{eq:sep}
ecc(\gamma) = \max_{z\in V} d(\gamma,z) = %
\max_{z\in V}\,\min \big(\alpha + d(u,z),\omega(u,v) - \alpha + d(v,z)\big).
\end{equation}
If we plot $f_{z}^+(\alpha)=\alpha+d(u,z)$ and
$f_{z}^-(\alpha)=-\alpha +\omega(u,v)+d(v,z)$ in Cartesian coordinates
for fixed $z=z_0$, the real-valued functions $f_{z_0}^+(\alpha)$ and
$f_{z_0}^-(\alpha)$ (separately depending on $\alpha$ in the range
$[0,\omega(e)]$) are represented by two line segments $(S_1)_{z_0}$
and $(S_{-1})_{z_0}$, with slope $+1$ and $-1$, respectively. For a
given $z=z_0$, the smallest of the two terms $f_{z_0}^+(\alpha)$ and
$f_{z_0}^-(\alpha)$ in~(\ref{eq:sep}) define a piecewise linear function 
$f_{z_0}(\alpha)$ made of $(S_1)_{z_0}$ and $(S_{-1})_{z_0}$.

Let $B_e(\alpha)$ be the {\em upper boundary\/} ($\alpha\in
[0,\omega(e)]$) of all the above $f_z(\alpha)$ ($\forall z\in V$). 
$B_e(\alpha)$ is a curve made up of piecewise linear segments, which passes through several local minima
(see Fig.~\ref{fig:bound}). A point $\gamma$ achieving the
smallest minimal value (i.e. the global minimum) of $B_e (\alpha)$
is an absolute center~$\gamma^*_e$ of the edge~$e$.

\medskip {\em (ii)} From the definition of $\gamma^*_e$, $\min_\gamma
ecc(\gamma)=\min_{\gamma^*_e}s(\gamma^*_e)$; and $\gamma^*$ 
achieves the minimal eccentricity. Therefore, an absolute center 
$\gamma^*$ of the graph is found at any point where the minimum of 
all $ecc(\gamma^*_e)$\/s is attained.
\end{proof}

\medskip
\begin{figure}[htb] %\psfig{figure=mdstenv.ps,height=5cm}
\centerline{ \parbox{0.5\textwidth}{%
\includegraphics[height=5cm]{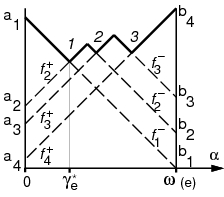}} 
\fbox{ \parbox{0.3\textwidth}{% 
\rule{.15in}{.02in} \hspace{0.1cm} $(a_i,b_i)$ pairs of distances\\
$f_i^+(\alpha) = \alpha + a_i$ \\ $f_i^-(\alpha) = \omega(u,v )- \alpha + b_i$}}
}
\caption{Example of an upper boundary~$B_e(\alpha)$}
\label{fig:bound}
\end{figure}
By Lemma~\ref{lem:hackimi}, we may consider this method from an algorithmic viewpoint. 
For each $e=(u,v)\in E$, let
\[
\mathcal{C}_e = \big\{(d_1,d_2) \;|\ \ \forall z\in V\ \ d_1 = d(u,z)\ \ \mbox{and}\ \ d_2 = d(v,z)\big\}.\]
Now, a pair ($d_1$'$,d_2$') is said to {\em dominate} a pair 
$(d_1,d_2)$ iff $d_1\le d_1$' and $d_2\le d_2$'; namely, the 
function $f_{z'}(\alpha)$ defined by ($d_1$'$,d_2$') is over 
$f_z(\alpha)$ defined by $(d_1,d_2)$. Any such pair $(d_1,d_2)$ 
will be ignored when it is dominated by another pair ($d_1$'$,d_2$'). 
The local minima of the upper boundary $B_e (\alpha)$ (numbered 
from 1 to 3 in Figure~\ref{fig:bound}) are located at the intersection 
of the segments $f_{i}^{-}(\alpha)$ and $f_{i+1}^{+}(\alpha)$, 
when all dominated pairs are removed. Sorting the set ${\mathcal C}_e$ 
in descending order, with respect to the first term of each remaining 
pair $(d_1,d_2)$, yields the list $L_e=((a_1,b_1),\ldots,(a_{|L_e|},b_{|L_e|}))$ 
consisting of all such ordered dominating pairs. Hence, the smallest minimum 
of $B_e(\alpha)$ for a given edge $e$ clearly provides an absolute center~$\gamma^*_e$ 
(see the Procedure {\em Gamma\_star}\/$(e)$ in Subsection~\ref{formal}). 
By Lemma~\ref{lem:hackimi}, once all the $\gamma^*_e$\/s are computed, 
we can obtain an absolute center~$\gamma^*$ of the graph~$G$. 
Last, by Lemma~\ref{lem:abscenter}, finding a {\em MDST} of $G$ reduces 
to the problem of computing~$\gamma^*$.

\subsubsection{All-Pairs Shortest Paths Algorithm (protocol {\em APSP})}
In \S\ref{subsubsect:abscenter}, we consider the distances $d(u,z)$ and $d(v,z)$, 
for all $z\in V$ and for each edge $e=(u,v)\in E$. The latter 
distances must be computed by a distributed (process terminating) 
routing algorithm; the protocol {\em APSP} is designed for that 
purpose in Subsection~\ref{formal}.

\subsubsection{Construction and knowledge transmission of a {\em MDST}} \label{subsubsect:constMDST}
At the end of the protocol {\em APSP}, every node knows the node~$u_{min}$ 
with the smallest {\em ID\/} and a shortest path in $G$ leading to $u_{min}$. 
Now, consider the collection of all paths $[u,\ldots,u_{min}]$ (computed by 
{\em APSP}), which start from a node $u\in V$ and end at node $u_{min}$. 
This collection forms a tree rooted at node $u_{min}$ and, since it is an SPT 
of $G$, the information is exchanged ``optimally'' in the SPT 
$\Psi(u_{min})$\footnote{In $\Psi(u_{min})$, the information is transmitted 
{\em ``optimally''} in terms of time and messages, in the sense that 
each edge weight may be regarded as the message transmission delay of 
a channel.}. Hence, the number of messages needed to search an extremum 
in the tree $\Psi(u_{min})$ is at most $\mathcal{O}(n)$ (with message size 
$\mathcal{O}\left(\log n + \log W\right)$).

When the computation of an absolute center $\gamma^*$ of $G$ 
is completed, the endpoint of $\gamma^*$'s edge having the smallest 
{\em ID\/} sends a message to $u_{min}$ carrying the {\em ID\/} of~$\gamma^*$. 
Upon receipt of the message, $u_{min}$ forwards the information all over 
$\Psi(u_{min})$ (adding the same cost in time and messages). 
Therefore each node of $G$ knows a route to~$\gamma^*$, 
and the MDST is built as a common knowledge for all nodes.

\subsection{The Design of the Algorithm {\em MDST}} \label{formal}

\subsubsection{Main Procedure}
The distributed algorithm {\em MDST} finds a MDST of an input 
weighted graph $G = (V,E,\omega)$ by computing an absolute center of~$G$. 

The algorithm is described from a node point of view. The algorithm assumes that each node~$u$
performs the following steps.
\begin{description}
\item[Step 1.]\ Node $u$ participates in the computation of the APSP. 
This computation gives the diameter and the radius of the graph $G$. 
Moreover it also gives $u_{min}$, the minimum node identity in the graph. 
(See \S\ref{subsubsect:constMDST}.) 

\item[Steps 2 \& 3.]\ An adjacent edge selection procedure is implemented by discarding heavy edges. 
The computation of the local minimum is accelerated with the help of an upper bound test. 
Note that the variable $\varphi$, used in the test, is a data structure with four fields: 
the best distance $\alpha$ from the first edge end, the upper bound value associed to $\alpha$, 
the identities of the first and second edge ends. (Edge ends are ordered by increasing identities.)

\item[Steps 4, 5 \& 6.]\ Node $u$ participates in finding the minimum of all values~$\varphi$. 
%First by computing the minimum values $\varphi$ on its children in the tree 
%$\Psi(u_{min})$ (see~\ref{subsubsect:constMDST}) and then by
%forwarding the result to its parent in $\Psi(u_{min})$. 

\item[Step 7]\ The best $\varphi$ is finally computed at the root of the tree $\Psi(u_{min})$ 
and next, it is broadcast to all nodes through $\Psi\left(u_{min}\right)$.
\end{description}
For the sake of clarity, we use abstract record data types (with dot notation).

\bigskip \noindent \hrulefill \; {\bf Algorithm {\em MDST} (for node $u$)}\; \hrulefill

\vspace{-1mm}
\begin{tabbing}xxx\=xxx\=xxx\=xxx\=xxx\=xxx\=\kill\>
{\bf Type} elt: {\bf record}\+\\
\>\>\>  $alpha\_best$, $upbound$: EdgeWeight;\+\\
\>\> $id_1$, $id_2$: NodeIdentity;\\
\>\>	{\bf end};\-\\[.3\baselineskip] 
{\bf Var}\ $\varphi$, $\varphi_u^*$: elt;\+\\ 
	$Diam$, $Radius$, $\alpha$, $localmin$: EdgeWeight;\\
     $u_{min}$: NodeIdentity;\\
	$d_u$: array of EdgeWeight; \`{\small \em (*~after step 1, $d_{u}[v]=d(u,v)$~*)}
\end{tabbing}
\vspace{-1mm}
\begin{enumerate}
\item[(1)]\ {\bf For all} $v\in V$ Compute $d_u[v]$, $Diam$, $Radius$ and $u_{min}$; 
\hfill {\small \em (*~from protocol APSP~*)}
\item[(2)]\ $\varphi.upbound \gets Radius$; 
\item[(3)]\ {\bf While} $\varphi.upbound>Diam/2$ {\bf do for} each edg $e=(u,v)$ s.t. $v > u$
\vspace{-1mm}
 \begin{enumerate}
 \item[(a)]\ $(\alpha,localmin) \gets$ {\em Gamma\_star}\/$(e)$; 
 \item[(b)]\ {\bf If} $localmin<\varphi.upbound$ {\bf then} $\varphi\gets (\alpha,localmin,u,v)$; 
\end{enumerate}
\item[(4)]\ $\varphi_u^* \gets \varphi$;
\item[(5)]\ {\bf Wait for reception of} $\varphi$ 
from each child of $u$ in $\Psi(u_{min})$ {\bf and do} 
\vspace{-1mm}
\begin{description}
\item[\ if]\ $\varphi_u^*.upbound<\varphi.upbound$\ {\bf then}\ $\varphi_u^* \gets \varphi$;
\end{description}
\item[(6)]\ {\bf Send} $\varphi_u^*$ to parent in $\Psi(u_{min})$; 
\item[(7)]\ {\bf If}\ $u = u_{min}$\ {\bf then Send} $\varphi_u^*$ to all its children
\vspace{-1mm}
  \begin{description}
  \item[else]\ {\bf Wait for reception of} $\varphi^*$ from its parent 
  {\bf then Send} $\varphi^*$ to all its children
  \end{description}
\end{enumerate}
\vspace{-1mm}

\noindent \hrulefill\-\hrulefill

Now we describe the basic procedures used in the algorithm: first the
protocol {\em APSP} and next the procedure {\em Gamma\_star}\/$(e)$.

\subsubsection{The APSP algorithm}
We need an algorithm that computes the all-pairs shortest paths 
in $G$ and does process terminate without any structural information 
(e.g., the knowledge an upper bound on~$n$). Our algorithm is 
based on the Netchange algorithm (see the proof in~\cite{Lamp82}), 
the Bellman-Ford algorithm (see~\cite{BeGa92}) and the 
$\alpha$-synchroniser described in~\cite{Awer85}. The three latter 
algorithms process terminate {\em iff\/} an upper bound on~$n$ 
is known. Otherwise, if the processes have no structural information, 
the above algorithms only ``message terminate'' (see~\cite{Awer85,Lava95}). 
However, proper distributed termination may be achieved without 
additional knowledge by using the same technique as designed 
in~\cite{ChVe90}. We now shortly describe the algorithm (from the 
viewpoint of node $u$, whose {\em ID\/} is $id_u$).

The protocol {\em APSP\/} is organized in phases after the first 
initialization step. This step starts initializing sets and variables 
($id_u$ is the selected {\em ID\/}): the distance to $id_u$ is set to 0, 
while all others distances are set to $\infty$ and the set {\em Updatenodes\/} 
is initialized to $\emptyset$. Next, every phase of the algorithm consists 
of three steps.
\begin{description}
\item[Step 1.]\ Send to all neighbours the {\em ID\/} of the selected node 
and its distance to node $u$.
\item[Step 2.]\ Wait for reception of the same number of messages sent 
in step 1 minus the number of inactive neighbours (see next paragraph). 
Upon receipt of a message, update distance tables. If the estimate of the 
distance to a node changes, add this node to the set {\em Updatenodes\/}. 
If an $\langle${\bf Inactive}$\rangle$ message is received from 
a neighbour, mark it {\em inactive\/}. When the awaited number of 
messages is received, start step~3.
\item[Step 3.]\ Choose an active node from the set {\em Updatenodes\/}
with the smallest distance to $u$ and go to step 1. If no such node exists 
then send an $\langle${\bf Inactive}$\rangle$ messsage to each active 
neighbour; node $u$ becomes an inactive node.
\end{description}

We need the following  rules to make the algorithm process terminate.
\begin{enumerate}
\item[(1)]\ An inactive node forwards updating messages (if necessary) to its 
inactive neighbours.
\item[(2)]\ Only one $\langle${\bf Inactive}$\rangle$ message is sent from 
node $u$ to a neighbour $v$ and this message is the last message (of 
protocol {\em APSP}) from $u$ to $v$.
\item[(3)]\ (from the previous rule) A node terminates only when two 
$\langle${\bf Inactive}$\rangle$ messages are received in each 
of its adjacent edges (one from each direction).
\end{enumerate}
\vspace{-1mm}
Thus, we designed a new distributed APSP protocol having a good message complexity: $2mn$.

\subsubsection{Procedure Gamma\_star}
Assume the list $L_e$ (defined in \S\ref{subsubsect:abscenter}) to be already 
constructed (e.g. with a heap) when the routing tables are computed. 
For any fixed edge $e\in E$, the next procedure returns a value $\gamma^*_e$.

\bigskip \noindent \hrulefill\ \;{\bf Procedure {\em Gamma\_star}\/$(e)$}\; \hrulefill

\begin{tabbing}\tabulation\>
{\bf Var} $min, \alpha$: real; \hspace{0.5cm} {\bf Init} $min\gets a_1$; $\alpha\gets 0$;\+\\ 
{\bf For}\ $i\gets 1$\ to\ $|L_e|-1$\ {\bf do}\ \hspace{1cm} {\small \sl (*~Compute the intersection $(x,y)$ 
of segments $f_{i}^{-}$ and $f_{i+1}^{+}$~*)} \\[.3\baselineskip] 
$x\gets \frac{1}{2}\, \big(\omega(e) - a_{i+1} + b_i\big)$; \\[.3\baselineskip]
$y\gets \frac{1}{2}\, \big(\omega(e) + a_{i+1} + b_i\big)$; \\[.3\baselineskip]
{\bf if}\ $y < min$\ {\bf then}\ $min \gets y$; $\alpha \gets x$ ;\-\\[.3\baselineskip]
{\bf Return}($\alpha$,$min$)
\end{tabbing}
\vspace{-1.5mm}

\noindent \hrulefill\-\hrulefill

\begin{rem} {\rm Recall that for each edge $e=(u,v)$ of $G$ with weight $\omega(e)$ and for any given $z\in V$, 
$d_1$ and $d_2$ are the distances $d_1=d(u,z)$ and $d_2=d(v,z)$. 
Moreover, all pairs $(a_i,b_i)$ ($1\le i\le |L_e|$) are those ordered pairs $(d_1,d_2)$ 
of the list $L_e$ which are dominating pairs (see the proof of Lemma~\ref{lem:abscenter}).}
\end{rem}

\section{Analysis} \label{anal}
For the purpose of the complexity analysis, let $W\in \mathbb{R}^+$ 
be the largest weight of all edges in $E$: the number of bits in 
$W$ is $\lceil \log_2 W\rceil$. Therefore, the weight of an edge 
requires $\mathcal{O}(\log W)$ bits and the weight of any path (with no cycle) 
uses $\mathcal{O}(\log(nW))$ bits.

The following Lemma~\ref{lem:apsp} gives the complexity of the
protocol {\em APSP}. Next, the Theorem~\ref{theo:mdst} derives the 
time and the communication complexity of the algorithm {\em MDST\/} 
from Lemma~\ref{lem:apsp}.

\begin{lem}\label{lem:apsp}
The All-Pairs Shortest Paths protocol APSP process terminates. It 
runs in $\mathcal{O}(n)$ time and uses $\mathcal{O}(nm)$ messages to compute the
routing tables at each node of G. Its message size is at most
$\mathcal{O}(\log n+\log (nW))$.
\end{lem}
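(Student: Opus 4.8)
The plan is to establish four separate claims: process termination, the $\mathcal{O}(n)$ time bound, the $\mathcal{O}(nm)$ message bound, and the $\mathcal{O}(\log n+\log(nW))$ message-size bound, treating each with the appropriate argument.

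First I would handle message size, since it is the easiest: each message of \emph{APSP} carries a node identity and a distance estimate. An identity needs $\mathcal{O}(\log n)$ bits, and any finite distance estimate is the weight of a simple path, hence at most $nW$, requiring $\mathcal{O}(\log(nW))$ bits; the $\langle\textbf{Inactive}\rangle$ messages carry only a constant number of bits. Summing gives the stated bound.

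Next I would treat correctness and process termination together, invoking the cited analyses of Netchange, Bellman--Ford, and the $\alpha$-synchroniser to argue that, phase by phase, each node's distance table converges to the true shortest-path distances (the protocol is essentially a synchronised Bellman--Ford relaxation, where in phase $k$ a node finalizes its distance to the source that is the $k$-th closest overall). For termination without a bound on $n$, I would follow the technique of \cite{ChVe90} as adapted by rules (1)--(3): once a node has nothing left in \emph{Updatenodes} it sends a single $\langle\textbf{Inactive}\rangle$ on each edge, inactive nodes still forward late updates so no correction is lost, and a node halts only after receiving an $\langle\textbf{Inactive}\rangle$ from both directions on every incident edge. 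I would show no node can send $\langle\textbf{Inactive}\rangle$ prematurely — i.e.\ it reactivates and resends updates if a better estimate later arrives — so the $\langle\textbf{Inactive}\rangle$ wave can only complete after all distances have stabilised, whence every node eventually terminates.

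For the time and message complexity I would argue that the algorithm runs in at most $n$ phases: a clean invariant is that after phase $k$ every node knows its correct distance to at least the $k$ nodes nearest to it (because step 3 always extracts the smallest pending estimate, mimicking a Dijkstra-like frontier across all sources simultaneously), so after $n$ phases everything is settled and the $\langle\textbf{Inactive}\rangle$ cascade drains in $\mathcal{O}(n)$ further time on an ABD network. In each phase a node sends one message per incident edge (step 1), so a phase costs $\mathcal{O}(m)$ messages globally; over $n$ phases this is $\mathcal{O}(nm)$, and the $\mathcal{O}(m)$ additional $\langle\textbf{Inactive}\rangle$ messages (one per directed edge) are absorbed. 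The main obstacle is the termination argument: making rigorous that the $\langle\textbf{Inactive}\rangle$/reactivation interplay cannot deadlock, cannot terminate early, and does not blow up the message count by triggering repeated reactivation waves — one must bound the number of times a node can go inactive-then-active, and the per-phase ``smallest pending estimate'' rule is exactly what caps this, tying the reactivation count back to the $n$-phase bound.
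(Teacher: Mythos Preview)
Your approach is essentially the paper's: correctness by appeal to Bellman--Ford/Netchange convergence, termination via FIFO channels and the technique of \cite{ChVe90}, $n$ phases of constant ABD time plus an $\mathcal{O}(n)$ $\langle\textbf{Inactive}\rangle$ cascade, $\sum_v n\,\delta(v)=2nm$ ordinary messages plus $2m$ $\langle\textbf{Inactive}\rangle$ messages, and the same message-size accounting (the paper notes two \emph{ID}s per message rather than one, but the bound is unchanged). The one point where you drift is the termination argument: you propose to handle nodes that ``reactivate and resend updates'' and to bound inactive--then--active cycles, but rule~(2) of the protocol explicitly makes the $\langle\textbf{Inactive}\rangle$ message the \emph{last} message on each directed edge, so no reactivation ever occurs; late corrections are handled by rule~(1) (inactive nodes forward updates to inactive neighbours), not by waking back up. The paper accordingly disposes of termination in one line by citing FIFO and \cite{ChVe90}, and the obstacle you flag as ``main'' simply does not arise.
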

 
\begin{proof}
The protocol {\em APSP} is almost identical to the well-known 
distributed Bellman-Ford shortest-paths algorithm (except for the notion 
of active/inactive nodes). The following definitions are taken from~\cite{BeGa92}.

Let $S\subseteq V$. A path $[u_0,\ldots,u_k]$ is called  an $S$-path 
if for all $i$ $(0\le i\le k)$, $u_i\in S$. The $S$-distance from $u$ to $v$, 
denoted $d^S(u,v)$, is the smallest weight of any $S$-path that joins $u$ 
to $v$. When $S=V$, we write $d(u,v)=d^V(u,v)$.
As a consequence, for all $z\in V$,
\begin{enumerate}
\item[{\em (i)}]\ If $S' = S\cup \{z\}$, then for all $u,\, v\in S$,
\begin{equation} \label{eq:min}
d^{S'}(u,v)\; \stackrel{def}{=}\; \min \left( d^S(u,v),d^{S'}(u,z)+d^{S'}(z,v) \right). 
\end{equation} 
\item[{\em (ii)}]\ Let {\em Neigh}$_u$ be the set of neighbours of a node $u\in V$. For any $v\in V$,
\end{enumerate}
\begin{equation} \label{eq:dist}
d(u,v)\; \stackrel{def}{=}\; \left\{
\begin{array}{ll}
0 & \text{if}\ \; u = v \\
\min_{z\in Neigh_u} \big(\omega(u,z) + d(z,v)\big) & \text{otherwise}.
\end{array}\right.
\end{equation}
Since the algorithm is built from the definitions~(\ref{eq:min}) 
and~(\ref{eq:dist}), it does converge to the shortest paths 
(see~\cite{BeGa92,Lamp82}). Also, since the communication channels 
are assumed to be FIFO (see~\cite{ChVe90} and Subsection~\ref{ass}), 
the algorithm process terminates. The above rules ensure that no message 
in the protocol {\em APSP\/} is sent to a terminating node.

Our protocol is based on algorithms which are known to converge in 
$n$ phases (see~\cite{BeGa92,Lamp82}). For an active node, a phase takes 
at most two time units in an ABD network (see Subsection~\ref{ass}): 
sending a message to each neighbour and next receiving a message only 
from all active neighbours). To make the protocol {\em APSP\/} process 
terminate we need an $\langle${\bf Inactive}$\rangle$ message: in the 
worst case (for example when $G$ is a line) exchanging $\langle${\bf Inactive}$\rangle$ 
messages between nodes takes $\mathcal{O}(n)$ time units.

The identity of each node is sent from each active node along each of its 
adjacent edges. The number of messages sent from every node $v$ is thus 
$\mathcal{O}(n\delta(v))$, where $\delta(v)$ is the degree of $v$. Inactive nodes 
simply forward update messages to their inactive neighbours and they do not increase 
the message complexity. Therefore, the message complexity of protocol {\em APSP\/} 
is proportional to $\displaystyle 2nm = \sum_v n\delta(v)$~\cite{Lamp82}. 

From the rules of the protocol (in Subsection~\ref{formal}), adding 
all $\langle${\bf Inactive}$\rangle$ messages makes exactly $2m$. 
Finally, the message complexity of protocol {\em APSP\/} is $\mathcal{O}(nm)$. 
Note that each message carries the {\em ID\/} of  the sending node, 
the {\em ID\/} of the selected node and the distance between both nodes.
\end{proof}

\begin{thm}\label{theo:mdst}
The algorithm MDST solves the (MDST) problem for any distributed 
positively weighted network G in $\mathcal{O}(n)$ time. The communication 
complexity of MDST is $\mathcal{O}\left( nm\left(\log n + \log(nW)\right) \right)$ bits, 
and its space complexity is at most $\mathcal{O}\left( n\left(\log n + \log(nW)\right) \right)$ 
bits (at each node). The number of bits used for the ID of a node is $\mathcal{O}(\log n)$, 
and the weight of a path ending at that node is $\mathcal{O}\left(\log nW\right)$.
\end{thm}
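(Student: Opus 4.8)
The plan is to derive each complexity bound in Theorem~\ref{theo:mdst} from the structure of the algorithm \emph{MDST}, handling the three phases (APSP, local \emph{Gamma\_star} computations, and the two tree-traversals over $\Psi(u_{min})$) separately and then combining. First I would invoke Lemma~\ref{lem:apsp} for the cost of Step~(1): it contributes $\mathcal{O}(n)$ time, $\mathcal{O}(nm)$ messages each of size $\mathcal{O}(\log n + \log(nW))$, hence $\mathcal{O}(nm(\log n + \log(nW)))$ bits. Correctness of the overall output follows by chaining Lemma~\ref{lem:hackimi} and Lemma~\ref{lem:abscenter}: once every node holds its row of the routing table, the edgewise minima $\gamma^*_e$ returned by \emph{Gamma\_star} yield, after the tournament in Steps~(5)--(6), a true absolute center $\gamma^*$; by Lemma~\ref{lem:abscenter} the SPT rooted at $\gamma^*$ is a MDST, and Step~(7) together with \S\ref{subsubsect:constMDST} disseminates $\gamma^*$ to all nodes, so the MDST becomes common knowledge.

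Next I would bound the purely local work of Steps~(2)--(4): for each incident edge $e=(u,v)$ with $v>u$, the procedure \emph{Gamma\_star}$(e)$ scans the list $L_e$, which has length at most $n$; summed over a node's incident edges this is $\mathcal{O}(n\,\delta(u))$ local operations, contributing nothing to message complexity and, crucially, nothing to the $\mathcal{O}(n)$ time bound since it is overlapped with / dominated by the communication rounds. The upper-bound pruning via \texttt{$\varphi$.upbound} (initialized to $Radius$, loop guarded by $Diam/2$) only shortens this work, so it is safe to bound it crudely. Then Steps~(5)--(7): the convergecast of $\varphi_u^*$ up $\Psi(u_{min})$ and the broadcast of $\varphi^*$ back down each traverse a tree on $n$ nodes, so each uses $n-1$ messages and $\mathcal{O}(n)$ time (in the ABD model, charging $\tau$ per hop along tree paths whose hop-length is $O(n)$), with each message carrying one $\varphi$ record, i.e. two edge weights and two identities, of size $\mathcal{O}(\log n + \log(nW))$.

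Adding the three phases: time is $\mathcal{O}(n)+\mathcal{O}(n)+\mathcal{O}(n)=\mathcal{O}(n)$; message/bit complexity is dominated by the APSP phase, $\mathcal{O}(nm(\log n + \log(nW)))$ bits, since the tree traversals add only $\mathcal{O}(n(\log n+\log(nW)))$ bits. For the space bound at a node $u$: it stores the array $d_u[\,\cdot\,]$ of $n$ edge weights, each $\mathcal{O}(\log(nW))$ bits, plus the identities it needs, $\mathcal{O}(n\log n)$ bits for the neighbour/routing information, plus a constant number of $\varphi$-records and scalars; this totals $\mathcal{O}(n(\log n + \log(nW)))$ bits. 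The per-node ID uses $\mathcal{O}(\log n)$ bits and an acyclic path weight uses $\mathcal{O}(\log(nW))$ bits, as already observed at the top of Section~\ref{anal}.

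The main obstacle is the time bound: one must argue carefully that the local \emph{Gamma\_star} loops and, more delicately, the convergecast/broadcast along $\Psi(u_{min})$ really cost only $\mathcal{O}(n)$ standard time units and not $\mathcal{O}(n)$ \emph{per edge} or $\mathcal{O}(D_{\Psi})$ with an uncontrolled $D_{\Psi}$. The key facts to pin down are that $\Psi(u_{min})$ is a subgraph of $G$ on $n$ nodes so any root-to-leaf hop-path has length at most $n-1$, that Step~(3)'s loop over incident edges is sequential at a node but bounded by $\delta(u)$ rounds of $\mathcal{O}(n)$ work each with no inter-node waiting, and that in the ABD model pipelining is not needed — a single pass up and a single pass down suffice. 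Once these are established the rest is the bookkeeping sketched above; I would also remark that the $\log(nW)$ terms absorb the $\log W$ of adjacent edge weights together with the accumulated path weights, which is why the bound is stated with $\log(nW)$ rather than $\log W$.
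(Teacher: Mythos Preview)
Your proposal is correct and follows essentially the same approach as the paper, which proves the theorem in a single sentence by pointing to Lemma~\ref{lem:apsp} and Subsection~\ref{subsubsect:constMDST}. You have simply spelled out the bookkeeping that the paper leaves implicit --- the phase-by-phase accounting, the correctness chain through Lemmas~\ref{lem:abscenter} and~\ref{lem:hackimi}, and the observation that local \emph{Gamma\_star} work is free in the ABD time model --- so your write-up is more thorough than, but not different from, the paper's own argument.
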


\begin{proof} The proof derives readily from the previous lemma and Subsection~\ref{subsubsect:constMDST} \end{proof}

\section{Concluding Remarks} \label{concl}
Given a positively weighted graph $G$, our algorithm {\em MDST\/} 
constructs a {\em MDST} of $G$ and distributively forwards the control 
structure over the named network $G$. This new algorithm is simple 
and natural. It is also time and message efficient: complexity measures 
are $\mathcal{O}(n)$ and $\mathcal{O}(nm)$, respectively, which, in some sense, is ``almost'' 
the best achievable (though not optimal) in a distributed setting.

By contrast, the space complexity seems to be far from satisfactory.
This is a drawback to the very general assumptions used in the algorithm, 
especially the assumptions on universal (APSPs) routings in arbitrary 
network topologies. For example, algorithm {\em MDST\/} needs a grand 
total of $\mathcal{O}\left(n^2 \left(\log n +\log(nW)\right)\right)$ bits to store 
all routing tables in the entire network. Now, it was recently shown that reasonable APSP 
routing schemes require at least $\Omega\left(n^2\right)$ bits~\cite{FrGu95}. 
This is only a logarithmic factor away from the space complexity of  
algorithm {\em MDST}.


\begin{thebibliography}{00}

\bibitem{Awer85}\bibfmta
{B.~Awerbuch}
{Complexity of network synchronization}{J. ACM}{32:804--823, 1985}

\bibitem{Awer87}\bibfmta
{B.~Awerbuch}
{Optimal distributed algorithms for minimum weight spanning tree counting, leader election and related problems}
{Proc. ACM STOC}{230--240, 1987}

\bibitem{AwCK90}\bibfmta
{B.~Awerbuch, I.~Cidon and S.~Kutten}
{Communication-optimal maintenance of replicated information}{Proc. IEEE FOCS}{492--502, 1990}

\bibitem{BeGa92}\bibfmtb
{D.~Bertsekas and R.~Gallager}
{Data Networks}{Prentice-Hall, 2nd edition}{1992}

\bibitem{BuBu93}\bibfmta
{M.~Bui and F.~Butelle}
{Minimum diameter spanning tree}{Proc. Int. Workshop on Principles of Parallel Computing (OPOPAC'93)}
{37--46, Hermes Science ed., 1993}

\bibitem{BuLB95}\bibfmta
{F.~Butelle, C.~Lavault and M.~Bui}
{A uniform self-stabilizing minimum diameter spanning tree algorithm}
{Proc. 9th Int. Workshop on Distributed Algorithms (WDAG'95)}{LNCS 972}{257-272, Springer-Verlag, 1995}

\bibitem{CaGM80}\bibfmta
{P.~M. Camerini, G.~Galbiati and F.~Maffioli}
{Complexity of spanning tree problems: Part~I}{European J. of Operational Research}{5:346-352, 1980}

\bibitem{ChVe90}\bibfmta
{S.~Chandrasekaran and S.~Venkatesan}
{A message optimal algorithm for distributed termination detection}
{J. of Parallel and Distributed Computing}{8(3):245-252, 1990}

\bibitem{Chri75}\bibfmtb
{N.~Christophides}{Graph Theory: An algorithmic approach}
{Computer Science and Applied Mathematics}{Academic Press, 1975}

%\bibitem{EGIN92}\bibfmta
%{D.~Eppstein, Z.~Galil, G.~F. Italiano and A.~Nissenzweig}
%{Sparsification -- a technique for speeding-up dynamic graph 
%algorithms}{Proc. 33rd IEEE FOCS}{60-69, 1992}
%
\bibitem{EITT+92}\bibfmta
{D.~Eppstein, G.~F. Italiano, R.~Tamassia, R.~E. Tarjan, J.~Westbrook and M.~Yung}
{Maintenance of a minimum spanning forest in a dynamic plane graph}{J. of Algorithms}{13:33-54, 1992}

\bibitem{FrGu95}\bibfmta
{P.~Fraigniaud and C.~Gavoille}
{Memory requirement for universal routing schemes}{Technical report, LIP 95-05}{ENSL, 1995}

\bibitem{GaHS83}\bibfmta
{R.~G. Gallager, P.~A. Humblet and P.~M. Spira}
{A distributed algorithm for minimum weight spanning trees}{ACM TOPLAS}{5(1):66-77, 1983}

%\bibitem{HaTa95}\bibfmta
%{R.~Hassin and A.~Tamir}{On the minimum diameter spanning tree problem}
%{Information Processing Letters}{53(2):109-111, 1995}
\bibitem{HaPS78}\bibfmta
{S. L. Hakimi and J. G. Pierce and E. F. Schmeichel}
{On $p$-centers in networks}
{Transportation Sci.}{12:1-15, 1978}

\bibitem{HLCW91}\bibfmta
{J.-M. Ho, D.~T. Lee, C.-H. Chang and C.~K. Wong}
{Minimum diameter spanning trees and related problems}{SIAM J. on Computing}{20(5):987-997, 1991}

\bibitem{IhRW91}\bibfmta
{E.~Ihler, G.~Reich and P.~Widmayer}
{On shortest networks for classes of points in the plane}
{Proc. Int. Workshop on Computational Geometry -- Methods, 
Algorithms and Applications}{Lecture Notes in Computer Science}
{103-111, 1991}

\bibitem{ItRa94}\bibfmta
{G.~F. Italiano and R.~Ramaswani}
{Maintaining spanning trees of small diameter}{Proc. ICALP'94}{227-238, 1994}

\bibitem{Lamp82}\bibfmta
{L.~Lamport}{An assertional correctness proof of a distributed algorithm}{Sci. Computer Programming}{2:175-206, 1982}

\bibitem{Lava95}\bibfmtb
{C.~Lavault}{\'Evaluation des algorithmes distribu\'{e}s -- analyse, complexit\'{e}, m\'{e}thode}
{Hermes Science ed.}{1995}

\bibitem{Lync96}\bibfmtb
{N.~Lynch}{Distributed Algorithms}{Morgan Kauffman}{1996}

\end{thebibliography}
\end{document}